\newcommand{\ignore}[1]{}
\definecolor{Darkblue}{rgb}{0,0,0.4}
\definecolor{Brown}{cmyk}{0,0.81,1.,0.60}
\definecolor{Purple}{cmyk}{0.45,0.86,0,0}
\newcommand{\mydriver}{hypertex} \ifpdf
 \renewcommand{\mydriver}{pdftex}
\newcommand{\lref}[2][]{\hyperref[#2]{#1~\ref*{#2}}}
\newtheorem{theorem}{Theorem}[section]
\newtheorem{definition}[theorem]{Definition}
\newtheorem{proposition}[theorem]{Proposition}
\newtheorem{assumption}[theorem]{Assumption}
\newtheorem{lemma}[theorem]{Lemma}
\newtheorem{claim}[theorem]{Claim}
\newtheorem{corollary}[theorem]{Corollary}
\newenvironment{proof}{

\noindent{\bf Proof:}} {\hfill$\blacksquare$

}
\newcommand{\R}{\mathbb{R}}
\def\cI{{\cal I}}
\def\cD{{\cal D}}
\def\cN{{\cal N}}
\def\lpo{\mathsf{LP^*}}
\def\oH{\overline{H}}
\def\sse{\subseteq}
\begin{document}

\title{Approximation Algorithms for Inventory Problems with \\Submodular or Routing Costs}

\author{ Viswanath Nagarajan\thanks{Industrial \& Operations Engineering, University of Michigan, Ann Arbor, MI, \{viswa, shicong\}@umich.edu}
\and Cong Shi$^\ast$ }
\date{}

\maketitle

\begin{abstract}
We consider the following two deterministic inventory optimization problems over a finite planning horizon $T$ with  non-stationary demands.
\begin{itemize}
\item {\em Submodular Joint Replenishment Problem.} This involves multiple item types and a single retailer who faces demands. In each time step, any subset of item-types can be ordered incurring a joint ordering cost which is submodular. Moreover, items can be held in inventory while incurring a holding cost. The objective is find a sequence of orders that satisfies all demands and minimizes the total  ordering and holding costs. 

\item {\em Inventory Routing Problem.} This involves a single depot that stocks items, and multiple retailer locations facing demands. 
In each time step, any subset of locations can be visited using a vehicle originating from the depot. There is also cost incurred for holding items at any retailer. The objective here is to satisfy all demands while minimizing the sum of routing and holding costs.
\end{itemize}

We present a unified approach that yields $\mathcal{O}\left(\frac{\log T}{\log\log T}\right)$-factor approximation algorithms for both problems when the holding costs are polynomial functions. A special case is the classic linear holding cost model, wherein this is the first sub-logarithmic approximation ratio for either problem.  

\flushleft{\em Key words: Inventory Management, Approximation Algorithms, Submodular Function, Joint Replenishment Problem, Inventory Routing Problem}
\end{abstract}

\section{Introduction}
\label{sec:intro}

Deterministic inventory theory provides streamlined optimization models that attempt to capture tradeoffs in managing the flow of goods through a supply chain. We consider two classical models in deterministic inventory theory: the \emph{Joint Replenishment Problem} (JRP) and the \emph{Inventory Routing Problem} (IRP). These inventory models have been studied extensively in the literature (see, e.g., \cite{aksoy1993multi}, \cite{joneja1990joint}) and  recently there has been significant progress on many variants of these models (see, e.g., \cite{levi2006primal}, \cite{levi2008constant}, \cite{nonner2013efficient}, \cite{CELS2014}, \cite{FNR14}). In this paper, we present a unified approach that yields approximation algorithms for both models with generalized cost structure -- the JRP with submodular setup cost and the IRP with arbitrary embedding metric.  

The JRP with deterministic and non-stationary demand is a fundamental yet notoriously difficult problem in inventory management. In these models, there are multiple item types, and we need to coordinate a sequence of (joint) orders to satisfy the demands for different item types before their respective due dates. Ordering inventory in a time period results in setup costs (or fixed ordering costs), and holding inventory before it is due results in holding costs. The objective is to find a feasible ordering policy to satisfy every demand point on time over a finite planning horizon so as to minimize the sum of setup and holding costs. The JRP is a natural extension of the classical economic lot-sizing model that considers the optimal trade-off between setup costs and holding costs for a single item type (see \cite{wagner1958dynamic}). With multiple item types, the JRP adds the possibility of saving costs via coordinated replenishment, a common phenomenon in supply chain management.

Most of the literature on deterministic JRP is on the \emph{additive} joint setup cost structure. Under this structure, there is a one-time setup cost if any item type is ordered, and there is an individual item setup cost for each item type ordered; the joint setup cost for this particular order is simply the sum of the one-time setup cost and these individual item setup costs. The additive joint setup cost structure loses significant amount of modeling power and flexiblity (see \cite{queyranne1986polynomial}, \cite{federgruen1992joint}, \cite{teo2001multistage}, \cite{CELS2014}). In this paper, we adopt the joint setup cost structure introduced recently in \cite{CELS2014} that satisfies two natural properties known as \emph{monotonicity} and \emph{submodularity}. The monotonity property means that the joint setup cost increases with the set of item types ordered. The submodularity property captures economies of scale in ordering more item types, i.e., the marginal cost of adding any specific item type to a given order decreases in the set of item types included.

The IRP is also a classical problem in inventory management that captures the trade-off between the holding costs for inventory and the routing costs for replenishing the inventory at various locations in a supply chain (see, e.g., \cite{Burns1985}, \cite{Coelho2014}, \cite{FZ1984}, \cite{FNR14}). The problem involves multiple item types that are stocked in a single depot, that must be shipped to meet the demand for these item types arising at multiple retailers specified over the course of a planning horizon. Similar to the JRP, the costs of holding a unit of inventory at each retailer are specified to compute the inventory holding costs. Different than the JRP, we consider transportation (or vehicle routing) costs in some metric defined by the depot and retailers in the IRP, instead of joint setup costs considered in the JRP.

\subsection{Main Results and Contributions}
We present a unified  approach that yields  $\mathcal{O}\left(\frac{\log T}{\log\log T}\right)$-approximation algorithms for both the JRP with submodular setup costs and the IRP with any embedding metric, when the holding costs are polynomial functions (which subsumes conventional linear costs as a special case). This is the first sub-logarithmic approximation ratio for either problem under these cost structures. 

We remark that if the setup cost function in submodular-JRP is time-dependent  then the problem (even with zero holding costs) becomes as hard to approximate as set cover~\cite{F98}. The same observation is true if the metric in IRP is time-dependent. So our sub-logarithmic ratio approximation algorithm relies crucially on the uniformity of these costs over time.

For the submodular JRP, \citet{CELS2014} obtained constant-factor approximation algorithms under several special submodular functions (i.e., tree, laminar and cardinality). In contrast, we consider general submodular functions with special (polynomial) holding costs.  

For the IRP, \citet{FNR14} considered a restricted class of ``periodic policies'' and obtained a constant-factor approximation algorithm. Whereas our result is for arbitrary policies and polynomial holding costs. 

A straightforward modification of our algorithm for polynomial holding costs also yields $\mathcal{O}(\log T)$-approximation algorithms for submodular JRP and IRP with arbitrary (monotone) holding costs. The submodular JRP result improves upon the approximation ratio of $\mathcal{O}(\log(NT))$ by \cite{CELS2014}. The IRP result is incomparable to the $\mathcal{O}(\log n)$ approximation ratio mentioned in~\cite{FNR14}, where $n$ is the number of retailers. 


\subsection{Our Approach}
At a high-level, the algorithm for submodular JRP has the following steps. (The algorithm for IRP is very similar -- we in fact present an algorithm for a unified problem formulation.) First, we  solve a natural time-indexed LP relaxation that was also used in~\cite{CELS2014}. Then we construct a ``shadow interval'' for each demand point that corresponds to fractionally ordering half unit of the item. We also stretch each shadow interval appropriately (depending on the degree of the holding cost function)
so as to obtain an optimal trade-off between holding and setup costs: this is what results in the $\mathcal{O}\left(\frac{\log T}{\log\log T}\right)$ approximation ratio. Next, we partition these stretched intervals into multiple groups based on 
well-separated widths. Finally we place a separate sequence of orders for each group, and argue using submodularity of the setup cost function that the total setup cost of each group is bounded by the LP setup cost. This step relies on the {\em fractional subadditivity} property of submodular functions. 

It turns out that we do not require the full strength of submodular functions: the algorithm and analysis work even for functions satisfying an approximate notion of fractional subadditivity (see Definition \ref{def_afs}) as long as the natural LP relaxation can be solved approximately. This allows us to also obtain an approximation algorithm for IRP since the TSP cost function satisfies $1.5$-approximate  fractional subadditivity and there is a $2+o(1)$ approximation algorithm for its LP relaxation (see Section~\ref{sec_solve} for details).

We believe that some of our techniques may be useful in obtaining a constant factor approximation algorithm for both problems in their full generality.

\subsection{Literature review} \label{literature}
As mentioned earlier, most of the existing literature on deterministic JRP with non-stationary demand uses the additive joint setup cost structure. \citet{arkin1989computational} showed that the additive JRP is NP-hard. \citet{nonner2013efficient} further showed that the additive JRP  is in fact APX-hard with nonlinear holding cost structure. There have been several approximation algorithms for the additive JRP (see \cite{CELS2014} and the references therein). The state-of-the-art approximation algorithms for the additive JRP are due to \cite{levi2006primal}, \cite{levi2008constant} and \cite{bienkowski2013better}, with approximation ratios 2, 1.80 and 1.791, respectively.

Due to the limited modeling power of the additive JRP, \citet{CELS2014} first studied the submodular JRP in which the setup costs are submodular. They gave an $\mathcal{O}(\log(NT))$-approximation algorithm for the general submodular JRP (where $N$ is number of items and $T$ is number of periods). They also analyzed three special cases of submodular functions which are laminar, tree and cardinality cases. They showed that the laminar case can be solved optimally in polynomial time using dynamic programming, and obtained a 3-approximation for the tree case and a 5-approximation for the cardinality case. Our work contributes to the literature by giving approximation algorithms for the general submodular JRP with special holding cost structures. 

The IRP has also been studied extensively in the literature (see \cite{Burns1985}, \cite{Coelho2014}, \cite{FZ1984}, \cite{FNR14} for an overview of this problem). The problem can be cast a mathematical program (see, e.g., \cite{Campbell2004}) and solution approaches typically involve heuristics that trade-off between holding and transportation costs (see \cite{Anily1990,Anily1993}, \cite{Chan1998}, \cite{Chien1989}, \cite{VM1997}, \cite{Bertazzi2008}). 
Closer to our work, \citet{FNR14} gave constant factor approximation algorithms for the IRP restricting to periodic schedules. In contrast, our results do not require the schedule to be periodic but require polynomial holding costs.

\subsection{Structure of this paper and some notations}
We organize the remainder of the paper as follows. In Section~\ref{sec_model}, we present a unified formulation for the submodular JRP and the IRP with arbitrary embedding metric, and state our main result. In Section~\ref{sec_algorithm}, we propose a unified approximation algorithm for both problems. In Section~\ref{sec_solve}, we discuss how to solve the LP relaxation efficiently. We conclude our paper in Section~\ref{sec_cf}.

Throughout the paper, we use the notation $\lfloor x \rfloor$ and $\lceil x \rceil$ frequently, where $\lfloor x \rfloor$ is defined as the largest integer value which is smaller than or equal to $x$; and $\lceil x \rceil$ is defined as the smallest integer value which is greater than or equal to $x$. Additionally, for any real numbers $x$ and $y$, we denote $x^+=\max\{x, 0\}$, $x \vee y=\max\{x,y\}$, and $x \wedge y=\min\{x, y\}$. The notation $:=$ reads ``is defined as". 

\section{A Unified Formulation for the JRP and the IRP} \label{sec_model}
In this section, we formally describe a unified problem statement that includes two classical deterministic inventory problems as special cases, i.e., the joint replenishment problem (JRP) with submodular setup costs and the inventory routing problem (IRP) with arbitrary embedding metric. We also present a unified framework for this problem, and state our main result.

\subsection{Problem Statement} 
There are $N$ elements (e.g., item types in the JRP or retailers in the IRP) that are needed to serve external demands over a finite planning horizon of $T$ periods; these elements are denoted by the ground set $\mathcal{N}=\{1,\ldots, N\}$, and the time periods are denoted by the set $\mathcal{T}=\{1,\ldots, T\}$. For each time period $t\in \mathcal{T}$ and each element $i \in \mathcal{N}$ , there is a known demand $d_{it} \ge 0$ units of that element. We use $\mathcal{D}$ to denote the set of all strictly positive demand points $(i,t)$ with $d_{it} > 0$. To satisfy these demands, an order may be placed in each time period. Each demand point $(i,t) \in \mathcal{D}$ has to be served by an order containing element $i$ before or at time period $t$, i.e., no backlogging or lost-sales are allowed. 

The inventory system incurs two types of cost -- the joint ordering cost and the holding cost. 
\begin{itemize}
\item The joint ordering cost is a function of the elements that place strictly positive orders in any given period. More specifically, for any time period $t$ and a subset of elements $S \subseteq \mathcal{N}$, the joint ordering cost of ordering demand for elements in $S$ in period $t$ is a function of $S$, which is denoted by $f(S)$. 
\item Because the setup cost of ordering an element is independent of the number of units ordered, there is an incentive to place large orders to meet the demand not just for the current time period, but for subsequent time periods as well. This is balanced by a cost incurred by holding inventory over time periods. We use $h_{st}^{i}$ to denote the holding cost incurred by ordering one unit of inventory in period $s$, and using it to meet the demand for element $i$ in period $t$. We assume that $h_{st}^{i}$ is non-negative and, for each demand point $(i,t)$, is a nonincreasing function of $s$, i.e., holding inventory longer is always more costly. Thus, if the demand point $(i,t)$ is served by an order at time period $s$, then the system incurs a holding cost of $H_{st}^{i} := d_{it}h_{st}^{i}$.
\end{itemize}

The goal is to coordinate a sequence of (joint) orders to satisfy all the demand points on time so as to minimize the sum of joint ordering and holding costs over the $T$ periods. 


The above unified problem statement encompasses two classical deterministic inventory problems described below.

{\bf The submodular JRP.} The JRP involves multiple item types and a single retailer who faces demands. In each time step, any subset of item-types can be ordered incurring a joint ordering cost which is submodular. The objective is find a sequence of orders that satisfies all demands and minimizes the total  ordering and holding costs. The elements in the above problem statement are the item types in the JRP; and the joint ordering cost $f(\cdot)$ is commonly referred to as the setup cost (or equivalently, the fixed ordering cost) in the JRP.  

The submodular JRP considers a special class of $f(\cdot)$ called submodular functions (see, e.g., \cite{CELS2014}). More precisely, we assume that the function $f(\cdot)$ is non-negative, monotone non-decreasing, and also submodular. The non-negativity and monotonicity assert that for every $S_{1} \subseteq S_{2}  \subseteq \mathcal{N}$, we have $0 \le f(S_{1}) \le f(S_{2})$. Submodularity requires that  for every set $S_{1},S_{2} \subseteq \mathcal{N}$, we have 
$$
f(S_{1})+f(S_{2}) \ge f(S_{1} \cup S_{2})+f(S_{1} \cap S_{2}).
$$
There is an equivalent definition that conveys the economies of scale more clearly. That is, for every set $S_{1} \subseteq S_{2} \subseteq \mathcal{N}$ and any item type $i \in \mathcal{N}$, we have $f(S_{2}\cup \{i\}) - f(S_{2}) \le f(S_{1}\cup \{i\}) - f(S_{1})$, i.e., the additional cost of adding an item type to the joint order is decreasing as more item types have been included in that order. 

{\bf The IRP with arbitrary embedding metric.} The IRP  involves a single depot $r$ that stocks items, and a set of retailer locations (denoted by the ground set $\mathcal{N}$) facing demands. In each time step, any subset of locations can be visited using a vehicle originating from the depot. The objective here is to satisfy all demands while minimizing the sum of routing and holding costs. The elements in the above unified problem statement are the retailers in the IRP; and the joint ordering cost $f(\cdot)$ is the shipping or routing cost. 

The IRP is specified by a complete graph on vertices $V$ with a metric distance function $w: \binom{V}{2} \rightarrow \R_+$ that satisfies symmetry (i.e. $w(ba) =w(ab)$ for any $a,b\in V$) and triangle inequality (i.e. $w(ab)+w(bc) \ge w(ac)$ for any $a,b,c \in V$). The vertex set $V=\mathcal{N}\cup r$, containing the depot and the set of retailers. The shipping or routing cost $f(S)$ can be defined as the travelling salesman (TSP) cost of visiting the retailers in $S \subseteq V$. Formally,
\begin{equation}
\label{def:tsp}
f(S)\quad := \quad \mbox{minimum length of tour that visits each vertex in }S\cup\{r\},\qquad \forall S\subseteq \mathcal{N}.
\end{equation}

\subsection{IP Formulation and its LP Relaxation}
The unified problem described above can be written as an integer programming problem as follows (see also \cite{CELS2014}). First we define two types of binary variables $y_{s}^{S}$ and $x_{st}^{i}$ such that 
\begin{eqnarray*}
y_{s}^{S} &=&
\begin{cases}
1, & \text{if the subset of elements } S \subseteq \mathcal{N} \text{ is ordered in period } s,\\
0, & \text{otherwise} .
\end{cases} \\
x_{st}^{i} &=&
\begin{cases}
1, & \text{if the demand point } (i,t) \text{ is satisfied using an order from  period } s,\\
0, & \text{otherwise} .
\end{cases}
\end{eqnarray*}
Then the integer programming (IP) formulation is given by
\begin{eqnarray}
\label{IP}
\text{\bf (IP)} \qquad  \text{min} && \sum_{S\subseteq \mathcal{N}} \sum_{s=1}^{T} f(S)y_{s}^{S} + \sum_{(i,t)\in \mathcal{D}}  \sum_{s=1}^{t} H_{st}^{i}x_{st}^{i} \\
\text{s.t.} && \sum_{s=1}^{t} x_{st}^{i}=1, \qquad \forall  (i,t) \in \mathcal{D}\nonumber \\ 
&& x_{st}^{i} \le \sum_{S: i\in S \subseteq \mathcal{N} }y_{s}^{S}, \qquad  \forall  (i,t) \in \mathcal{D}, \forall s =1,\ldots, t  \nonumber \\ 
&& x_{st}^{i}, y_{s}^{S} \in \{0,1\}, \qquad  \forall  (i,t) \in \mathcal{D}, \forall s =1,\ldots, t, \forall  S \subseteq \mathcal{N}. \nonumber
\end{eqnarray}
The first constraint in (\ref{IP}) enforces that every demand point $(i,t)$ must be served by an order before or in time period $t$. The second constraint in (\ref{IP}) ensures that the joint order $S$ has to contain element $i$ if any demand $(i,t)$ is served at time period $s$. 
There is a natural linear programming (LP) relaxation of (IP) that relaxes the integer constraints on $x_{st}^{i}$ and $y_{s}^{S}$ to non-negativity constraints. 

To obtain approximation algorithms for the IP  (\ref{IP}) using our framework, we only need to assume that the set function $f(\cdot)$ satisfies an approximate notion of fractional subaddivitity (which is much weaker than submodularity).
\begin{definition}[$\beta$-approximate fractional subadditivity] 
\label{def_afs}
The set function $f(\cdot)$ is $\beta$-approximate fractional subadditive, if for any collection $\{S_i, \lambda_i\}$ of weighted subsets with $0\le \lambda_{i} \le 1$ and $\sum_{i|v \in S_{i}} \lambda_{i}\ge 1$ for each $v\in S$, we have $f(S) \le \beta \cdot \sum \lambda_{i} f(S_{i})$. Namely, if the sets $S_{i}$ form a \emph{fractional cover} of $S\sse {\cal N}$, then the cost of $S$ is at most $\beta$ times the sum of the costs $f(S_i)$ weighted by the corresponding coefficients. 
\end{definition}

It is known that if a function is submodular, then it is also fractional subadditive (see \cite{Feige2006}), i.e., the notion of submodularity is stronger. For the submodular JRP, the setup cost function $f(S)$ is submodular and hence also fractional subadditive (or equivalently, $1$-approximate fractional subadditive). 

For the IRP with arbitrary embedding metric, the vehicle routing cost $f(S)$ although not submodular, can be shown to be $1.5$-approximate fractional subadditive. This follows from the fact that the natural LP-relaxation for TSP has an integrality gap at most $1.5$ (see \cite{Wolsey1980} and \cite{Shmoys1990281}). 

Note that the LP relaxation of (\ref{IP}) has an exponential number of variables; we need to ensure that this LP relaxation can be (at least approximately) solved efficiently.
\begin{definition}[$\gamma$-approximate LP solution]
\label{def_alp}
We say that a feasible LP solution is $\gamma$-approximate, if its objective value is at most $\gamma$ times the optimal LP objective value.
\end{definition}
Using the ellipsoid method one can compute efficiently: an exact LP solution for the submodular JRP  and a $(2+ o(1))$-approximate LP solution for IRP. We delegate the discussion on this to Section \ref{sec_solve} for better readability of this paper.

\subsection{Our Main Results}

\begin{assumption}[$\alpha$-degree polynomial holding cost]
\label{assump1}
For each element $i \in \mathcal{N}$ and $1\le s \le t \in \mathcal{T}$, the holding cost of holding an inventory unit of element $i$ from period $s$ to $t$ is
$$
h_{st}^{i} =(t-s)^{\alpha}\bar{h}^{i}_t,
$$
for some base per-unit holding cost $\bar{h}^{i}_t > 0$ and some $\alpha \ge 1$. 
\end{assumption}
Note that when $\alpha=1$, this reduces to the conventional linear holding cost. We also have $H_{st}^{i} =d_{t}^{i}(t-s)^{\alpha}\bar{h}^{i}_t$.

Now we are in a position to formally state our main result (which will be proved in the following section).
\begin{theorem}
\label{mainresult}
Under Assumption \ref{assump1},  there is an $\mathcal{O}\left(\alpha \beta \gamma \cdot \frac{\log T}{\log \log T}\right)$-approximation algorithm for the integer program defined in (\ref{IP}), provided that $f(\cdot)$ is $\beta$-approximate fractional subadditive, and a $\gamma$-approximate solution to the LP relaxation of \eqref{IP} can be found in polynomial time. 
\end{theorem}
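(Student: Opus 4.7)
The plan is to realize the five-step outline of Section 1.2 and optimize the stretching parameter. I would first obtain a $\gamma$-approximate fractional solution $(x^i_{st}, y^S_s)$ to the LP relaxation of \eqref{IP}. For each $(i,t)\in\mathcal{D}$ define the shadow interval $I(i,t) := [\sigma(i,t), t]$, where $\sigma(i,t)$ is the smallest $s$ for which $\sum_{s'=s}^{t} x^i_{s',t}\ge \tfrac12$, and let $\ell(i,t) := t-\sigma(i,t)$. This choice gives two driving properties: (a) strictly more than half of the unit mass $\sum_{s} x^i_{s,t}=1$ lies at times $s'\le\sigma(i,t)$, each contributing at least $(t-\sigma(i,t))^{\alpha}\bar h^i_t$ per unit under Assumption~\ref{assump1}, so the LP holding cost for $(i,t)$ is at least $\tfrac12\, d^i_t\,\ell(i,t)^{\alpha}\bar h^i_t$; and (b) combining $x^i_{s,t}\le \sum_{S\ni i} y^S_s$ with the definition of $\sigma$ gives $\sum_{s\in I(i,t)}\sum_{S\ni i} y^S_s \ge \tfrac12$, so doubling the $y$-mass inside $I(i,t)$ fractionally covers element $i$ in the sense of Definition~\ref{def_afs}.

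Next, I would pick a stretching parameter $k\ge 2$ and replace each shadow interval by the stretched interval $J(i,t) := [\max(1,\, t-k\ell(i,t)),\ t]$, partitioning $\mathcal{D}$ into $L+1 = O(\log T/\log k)$ groups $G_g := \{(i,t) : \ell(i,t)\in[k^g,k^{g+1})\}$. For each group $g$, schedule orders at the uniformly spaced times $\tau^g_j := j\cdot k^{g+1}$. Since $|J(i,t)|\ge k^{g+1}$ for $(i,t)\in G_g$, every stretched interval contains at least one such $\tau^g_j$; assign each $(i,t)\in G_g$ to the latest $\tau^g_j\le t$ inside $J(i,t)$, and let $S^g_j$ be the resulting set of elements ordered at $\tau^g_j$. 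The key observation is that this ``latest'' rule forces $\tau^g_{j+1} = \tau^g_j + k^{g+1} > t$, i.e., $t-\tau^g_j < k^{g+1}$.

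The cost analysis is then group-by-group. Fix $g$ and $j$ and set $W^g_j := (\tau^g_j - k^{g+1},\ \tau^g_j + k^{g+1})$, a window of length $2k^{g+1}$. For every demand $(i,t)$ assigned to $\tau^g_j$, the above observation gives $t\in[\tau^g_j, \tau^g_j+k^{g+1})$ and $\sigma(i,t)\ge t-k^{g+1} > \tau^g_j - k^{g+1}$, hence $I(i,t)\subseteq W^g_j$. Taking $\lambda^S_s := 2y^S_s$ for $s\in W^g_j$ (assuming WLOG $y^S_s\le \tfrac12$, at a factor-$2$ cost), property~(b) provides a fractional cover of $S^g_j$, and Definition~\ref{def_afs} gives $f(S^g_j)\le 2\beta\sum_{s\in W^g_j,\,S} y^S_s f(S)$. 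Since consecutive windows (spacing $k^{g+1}$, length $2k^{g+1}$) overlap only pairwise, summing over $j$ and then over $g$ yields total setup cost $O(\beta\log T/\log k)\cdot \mathrm{LP}_{\mathrm{setup}}$. The holding cost of demand $(i,t)$ served at $\tau^g_j$ is at most $d^i_t(t-\tau^g_j)^{\alpha}\bar h^i_t < d^i_t(k\ell(i,t))^{\alpha}\bar h^i_t \le 2k^{\alpha}\cdot\mathrm{LP}_{\mathrm{holding}}(i,t)$ by~(a), for a total of $O(k^{\alpha})\cdot\mathrm{LP}_{\mathrm{holding}}$.

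Combining, the overall cost is at most $O\!\bigl(\gamma\cdot\max(k^{\alpha},\ \beta\log T/\log k)\bigr)\cdot\mathrm{OPT}$, and choosing $k = \Theta\bigl((\log T/\log\log T)^{1/\alpha}\bigr)$ balances the two terms, since $k^{\alpha}=\Theta(\log T/\log\log T)$ and $\log k=\Theta((\log\log T)/\alpha)$, making both sides $\Theta(\alpha\log T/\log\log T)$ and yielding the claimed $O(\alpha\beta\gamma\cdot\log T/\log\log T)$ bound. I expect the main technical obstacle to be the precise per-order cover bookkeeping: choosing the window $W^g_j$ wide enough to contain every $I(i,t)$ assigned to $\tau^g_j$ while narrow enough that consecutive windows overlap only $O(1)$ times, together with handling the $\lambda^S_s\in[0,1]$ constraint in Definition~\ref{def_afs} (a minor issue since the original IP has $y^S_s\in\{0,1\}$, so large LP values can be truncated at constant cost).
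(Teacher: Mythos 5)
Your proposal is correct and follows essentially the same route as the paper: half-mass shadow intervals, geometric length classes with a stretch factor, uniformly spaced orders per class whose cost is charged via $\beta$-approximate fractional subadditivity to the LP $y$-mass in windows of constant overlap, and a balance of the $k^{\alpha}$ holding loss against the $\log T/\log k$ number of classes (the paper fixes $\rho=\lfloor(\log T)^{1/(2\alpha)}\rfloor$ and routes the covering argument through an intermediate Covering-LP, but these are cosmetic differences). The only repair needed is to anchor each grid at period $1$, i.e.\ take $\tau^g_j = 1 + j\cdot k^{g+1}$, so that stretched intervals truncated at the left end of the horizon (and the $\ell(i,t)=0$ demands, which fall outside your groups $G_g$) still contain an order point; with that, your window and overlap bookkeeping goes through verbatim.
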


Corollary \ref{mainresult2} below is an immediate consequence of Theorem \ref{mainresult}, since 
\begin{enumerate}
\item $\beta=\gamma=1$ for the submodular JRP;
\item $\beta=1.5$ and $\gamma=2+o(1)$ for the IRP.
\end{enumerate}

\begin{corollary}
\label{mainresult2}
Under Assumption \ref{assump1},  there is an $\mathcal{O}\left(\frac{\log T}{\log \log T}\right)$-approximation algorithm for both the submodular JRP and the IRP with arbitrary embedding metric.
\end{corollary}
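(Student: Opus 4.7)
The plan is to compute a $\gamma$-approximate fractional solution $(\hat x,\hat y)$ of the LP relaxation of~(\ref{IP}), round it using a ``shadow interval'' construction per demand, and schedule orders in carefully chosen periodic batches. Write $\hat C_s := \sum_{s,S} f(S)\,\hat y^S_s$ and $\hat C_h := \sum_{(i,t)\in\mathcal{D}}\sum_{s\le t} H^i_{st}\,\hat x^i_{st}$ for the two components of the LP value, so $\hat C_s+\hat C_h\le\gamma\cdot\mathrm{OPT}$. For every demand point $(i,t)\in\mathcal{D}$ define the \emph{shadow time} $\sigma_i(t) := \min\bigl\{s:\ \sum_{s'=1}^{s}\hat x^i_{s',t}\ge 1/2\bigr\}$ and \emph{width} $w_{i,t} := t-\sigma_i(t)$. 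Two facts follow immediately. First, at least $1/2$ unit of $\hat x$-mass for $(i,t)$ lies on times $s\le\sigma_i(t)$, each contributing holding coefficient $(t-s)^\alpha\bar h^i_t\ge w_{i,t}^\alpha\bar h^i_t$, so the LP holding cost for $(i,t)$ is at least $\tfrac12 w_{i,t}^\alpha d^i_t\bar h^i_t$. Second, more than $1/2$ unit of mass lies on $s\in[\sigma_i(t),t]$, which combined with the LP coupling constraint $\hat x^i_{st}\le\sum_{S\ni i}\hat y^S_s$ gives $\sum_{s\in[\sigma_i(t),t]}\sum_{S\ni i}\hat y^S_s>1/2$.

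Next, pick a parameter $K\ge 2$ (to be chosen) and partition the demands into $G=\lceil\log_K T\rceil$ groups by width, $\mathcal{D}_j:=\{(i,t):\ w_{i,t}\in[K^j,K^{j+1})\}$ for $j=0,\ldots,G-1$. For group $j$, schedule orders at the periodic times $t^*_k:=kK^{j+1}$, $k=1,2,\ldots$, placing at $t^*_k$ the set $S^{(j)}_k:=\{i:\ \exists(i,t)\in\mathcal{D}_j\text{ with } t^*_k\le t<t^*_k+K^{j+1}\}$. (The degenerate case $w_{i,t}=0$ forces $\hat x^i_{tt}>1/2$ and hence $\sum_{S\ni i}\hat y^S_t>1/2$; such demands are handled by ordering the offending items at $t$ itself, incurring zero holding cost and total setup at most $O(\beta)\hat C_s$ by a single application of Definition~\ref{def_afs}.) Each demand $(i,t)\in\mathcal{D}_j$ is served by the largest $t^*_k\le t$, which satisfies $t-t^*_k<K^{j+1}$; hence its algorithmic holding cost is at most $(K^{j+1})^\alpha d^i_t\bar h^i_t\le K^\alpha w_{i,t}^\alpha d^i_t\bar h^i_t\le 2K^\alpha$ times the LP holding cost for $(i,t)$, giving a total algorithmic holding cost of $O(K^\alpha)\hat C_h$.

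To bound the setup cost of group $j$, fix $k$. Every $i\in S^{(j)}_k$ comes from a demand whose shadow interval has width $<K^{j+1}$ and contains some point of $[t^*_k,t^*_k+K^{j+1})$, and therefore lies entirely in the window $W_k:=(t^*_k-K^{j+1},\,t^*_k+K^{j+1})$. The second fact from the first paragraph then shows that $\{(s,S):\hat y^S_s>0,\ s\in W_k,\ i\in S\}$ with coefficients $\{2\hat y^S_s\}$ is a fractional cover of $S^{(j)}_k$ in the sense of Definition~\ref{def_afs}, so $f(S^{(j)}_k)\le 2\beta\sum_{s\in W_k}\sum_S \hat y^S_s f(S)$. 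Since the windows $W_k$ have length $2K^{j+1}$ and are spaced $K^{j+1}$ apart, every period $s$ lies in at most two windows; summing over $k$ yields a per-group setup of $4\beta\hat C_s$, and a total setup of $O(\beta G)\hat C_s$.

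Adding the two contributions gives an algorithmic cost of $O\bigl(\gamma\max\{\beta G,\,K^\alpha\}\bigr)\cdot\mathrm{OPT}$. Choosing $K=(\alpha\log T/\log\log T)^{1/\alpha}$ makes $K^\alpha=\alpha\log T/\log\log T$ and $G=\log T/\log K=\Theta(\alpha\log T/\log\log T)$, balancing both sides at $\Theta(\alpha\log T/\log\log T)$ and producing the claimed $O(\alpha\beta\gamma\cdot\log T/\log\log T)$ ratio. The main obstacle is the bookkeeping of the third paragraph: using period exactly $K^{j+1}$ (rather than the naive $K^j$) is what forces each LP atom into at most $O(1)$ fractional-cover windows per group, so the factor $K$ in the window size is absorbed instead of accumulating across $k$; simultaneously, the polynomial growth of the holding coefficient is what keeps the resulting slack $K^\alpha$ small enough to offset against $G$, and without both ingredients the analysis collapses to the standard $O(\log T)$ bound.
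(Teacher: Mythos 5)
Your argument is essentially the paper's own proof of Theorem~\ref{mainresult} (from which Corollary~\ref{mainresult2} follows immediately): the same half-mass shadow intervals, the same geometric grouping of widths into $O\left(\alpha\frac{\log T}{\log\log T}\right)$ classes, the same periodic batching with a window of twice the period charged to the LP's $y$-mass via $\beta$-approximate fractional subadditivity, and the same holding/ordering trade-off --- the paper merely fixes its stretch parameter so the holding loss is $\sqrt{\log T}$ rather than balancing it exactly at $\frac{\log T}{\log\log T}$, which makes no asymptotic difference. Two small points: your order times $t^*_k=kK^{j+1}$ with $k\ge 1$ leave demands with $t<K^{j+1}$ in group $j$ unserved, which is fixed by also ordering at period $1$ (the paper's $\tau_0=1$) with the same bounds; and to conclude the corollary for the IRP you must still invoke the paper's facts that the TSP cost is $1.5$-approximately fractionally subadditive and that the exponential-size LP admits a $(2+o(1))$-approximate solution in polynomial time, so that $\beta,\gamma=O(1)$.
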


To the best of our knowledge, this is the first sub-logarithmic approximation ratio for either problem.

We remark that it is immediate that our approach yields $\mathcal{O}(\log T)$-approximation algorithms for submodular JRP and IRP with arbitrary (monotone) holding costs (i.e., waiving Assumption \ref{assump1}). 


\section{LP-Rounding Algorithm} \label{sec_algorithm}
We present an LP-rounding algorithm for the integer program (\ref{IP}) under Assumption \ref{assump1} in Section \ref{subsec_algo}, and then carry out a worst-case performance analysis in Section \ref{subsec_analysis}.

\subsection{Algorithm Description} \label{subsec_algo}
We describe our procedure of rounding a $\gamma$-approximate solution $(y,x)$ of (LP). We set $\rho := \lfloor (\log T)^{1/(2\alpha)} \rfloor $.

\paragraph{\bf Step 1 -- Constructing extended  shadow intervals.}
We first construct what-we-call \emph{extended shadow intervals} as follows. For each demand point $(i,t)$, we take the $\gamma$-approximate LP solution and find a time period $s^{\prime}_{(i,t)}$ such that 
$$
\sum_{s=s^{\prime}_{(i,t)}}^{t}x_{st}^{i} \ge 1/2 \text{ and } \sum_{s=s^{\prime}_{(i,t)}+1}^{t}x_{st}^{i} < 1/2,
$$
i.e., finding the closest $s$ to the left of $t$ such that the sum of $x$ variables for $(i,t)$ contains a half point. Then $[s^{\prime}_{(i,t)},t]$ is called the \emph{shadow interval} for this particular demand point $(i,t)$. 
We also measure its length $l_{(i,t)}: = t-s^{\prime}_{(i,t)}$.

Next for each demand point $(i,t)$, we round the length $l_{(i,t)}$ up to the nearest power of $\rho$. If $s^{\prime}_{(i,t)}=t$ then we set $s^{*}_{(i,t)}=t$. Else we find the smallest integer $m\ge 1$ such that
$\rho^{m} \ge t-s^{\prime}_{(i,t)},$ and then stretch the original shadow interval from $t$ to $s^{*}_{(i,t)}$ where
$$
s^{*}_{(i,t)} = t - \rho^{m}  \le s^{\prime}_{(i,t)}.
$$ 
We call the interval $[s^{*}_{(i,t)},t]$ the \emph{extended shadow interval} for the demand point $(i,t)$, and also measure its length $l^{*}_{(i,t)} := t-s^{*}_{(i,t)}$. Figure \ref{fig_si} below gives a graphical representation of this step.

\begin{figure}[htbp]
\begin{center}
\includegraphics[width=12cm]{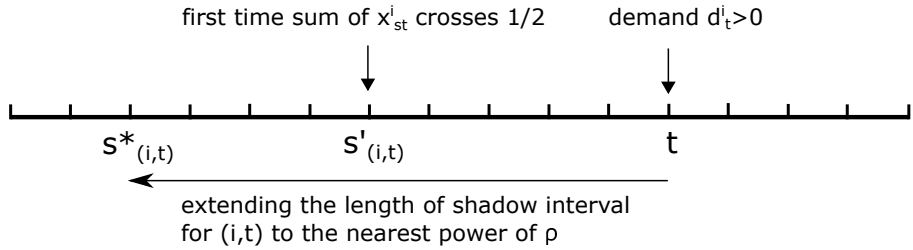}
\caption{Illustration of an extended shadow interval for demand point $(i,t)$.}
\label{fig_si}
\end{center}
\end{figure}

\paragraph{\bf Step 2 -- Partitioning demand points according to extended shadow intervals.}
Next we partition the demand points according to the length of their extended shadow intervals. For each demand point $(i,t)$, its length $l^{*}_{(i,t)}$ falls into exactly one of the values below (recall by construction $l^{*}_{(i,t)}$ is either zero or an integer power of $\rho$).
$$
\{0, \rho^{1}, \rho^{2}, \ldots ,  \rho^{k-2} , \rho^{k-1} \wedge T \}, \text{ where } k= 1+\left\lceil \log_{\rho} T \right\rceil.
$$
In this way, we have partitioned the demand points into $k = \mathcal{O}\left(\alpha \frac{\log T}{\log \log T}\right)$ number of groups as follows,
\begin{eqnarray*}
\mathcal{L}_{0} = \left\{(i,t) \in \mathcal{D}: l^*_{(i,t)} = 0 \right\}  \quad \mbox{and}\quad 
\mathcal{L}_{m} = \left\{(i,t) \in \mathcal{D}: l^*_{(i,t)} = \rho^{m} \wedge T) \right\}, \,\, \forall \; m \in \{1,\ldots,k-1\}.
\end{eqnarray*} 
That is, the shadow intervals within each group $\mathcal{L}_{m}$ share the same length: 
$$w_m := \left\{ 
\begin{array}{ll}
0 & \mbox{ if } m=0\\
\rho^m \wedge T & \mbox{ if } m \in \{1,2,\ldots,k-1\}
\end{array}
\right.$$

\paragraph{\bf Step 3 -- Placing orders.}  Based on the above partition of demand points, we describe our ordering procedure.  

Now fix an $m \in \{0,1,\ldots,k-1\}$ and focus on the demand group $\mathcal{L}_{m}$. Let $\tau_{j} = 1 + j\cdot w_m$ $(\le T)$ for $j=0,1,\ldots$.  We place a tentative (joint) order in each period $\tau_{j}$ $(j=0,1, \ldots)$, i.e., once every $w_m$ periods. 
In each period $\tau_{j} \le T$ $(j=0,1,2  \ldots)$, we identify the set of elements
$$
A_{m}^{j} = \left\{i: (i,t)\in \mathcal{L}_{m} \text{ and } \tau_{j} \in [s^{*}_{(i,t)},t]   \right\},
$$
i.e., all the elements within $\mathcal{L}_{m}$ whose shadow intervals contain (or intersect with) time period $\tau_{j}$. We then place an actual joint order that serves the demand points associated with $A_{m}^{j}$ in period $\tau_{j}$. Figure \ref{fig2} gives one specific example of how the algorithm places these orders.

\ignore{
Then starting from period $\tau_{0}=1$, we identify  the set of elements (i.e., item types in the JRP and retailers in the IRP)
$$
A_{m}^{0} = \left\{i: (i,t)\in \mathcal{L}_{m} \text{ and } 1\in [s^{*}_{(i,t)},t] \right\},
$$
i.e., all the elements within $\mathcal{L}_{m}$ whose shadow intervals contain (or intersect with) time period $1$. We then place an actual joint order that serves the demand points associated with $A_{m}^{0}$ in period $1$.

In each subsequent period $\tau_{j} \le T$ $(j=1,2, \ldots)$, we identify the set of elements
$$
A_{m}^{j} = \left\{i: (i,t)\in \mathcal{L}_{m} \text{ and } \tau_{j} \in [s^{*}_{(i,t)},t]  \text{ and }  (i,t) \notin A_{m}^{0}\cup \ldots \cup A_{m}^{j-1}\right \},
$$
i.e., all the elements within $\mathcal{L}_{m}$ whose shadow intervals contain (or intersect with) time period $\tau_{j}$ but have not been ordered yet. We then place an actual joint order that serves the demand points associated with $A_{m}^{j}$ in period $\tau_{j}$. Figure \ref{fig2} gives one specific example of how the algorithm places these orders.
}

We repeat the above procedure for all groups $m=0,1,  \ldots, k-1$. In any given period, if there is more than one joint order (across different groups), we simply merge them into a single joint order.

\begin{figure}[htbp]
\begin{center}
\includegraphics[scale=0.65]{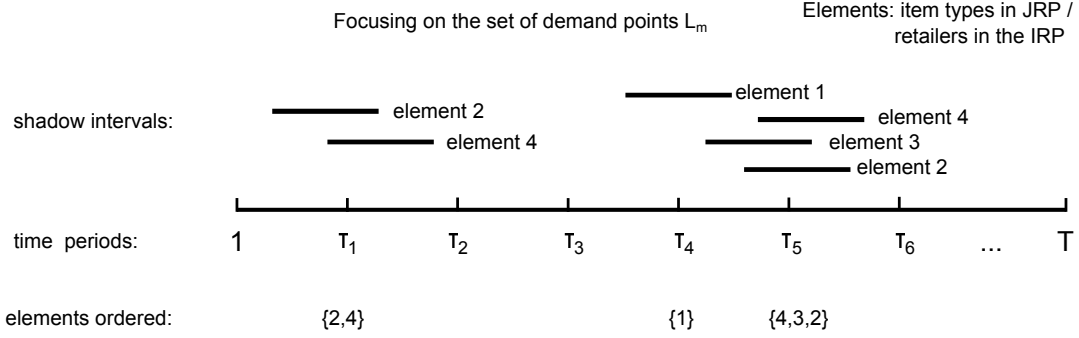}
\caption{Placing actual orders for the demand points within $\mathcal{L}_{m}$}
\label{fig2}
\end{center}
\end{figure}

This concludes the description of our LP-rounding algorithm.

\subsection{Worst-case Analysis} \label{subsec_analysis} 
We shall prove that our LP-rounding algorithm gives an $\mathcal{O}\left(\alpha \beta \gamma \frac{\log T}{\log \log T}\right)$-approximation for the unified problem. For brevity, we just call the $\gamma$-approximate solution $(y,x)$ to the LP relaxation of (\ref{IP}) \emph{the $\gamma$-approximate LP solution}.

\subsubsection*{Analysis of Holding Cost}

\begin{lemma} \label{lemma_holding_1}
Let $[s^{\prime}_{(i,t)},t]$ be the shadow interval for some demand point $(i,t)$. Then we have 
\begin{eqnarray}
H_{s^{\prime}_{(i,t)}t}^{i} \le  2\cdot \sum_{s=1}^{t} H_{st}^{i}x_{st}^{i},
\end{eqnarray}
where $x_{st}^{i}$'s are in the  $\gamma$-approximate LP solution.
\end{lemma}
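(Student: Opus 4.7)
The plan is to exploit two facts: (i) the LP constraint $\sum_{s=1}^{t} x_{st}^{i} = 1$ holds even for the $\gamma$-approximate solution, and (ii) by assumption $h_{st}^{i}$ is nonincreasing in $s$ for each fixed $(i,t)$, so that ordering earlier only makes the per-unit holding cost larger. The shadow interval endpoint $s'_{(i,t)}$ is essentially a weighted median of the support of $\{x_{st}^{i}\}_{s\le t}$, and the idea is that the mass strictly to the left of (and including) $s'_{(i,t)}$ exceeds $1/2$, while every such $s$ contributes a holding cost at least as large as $H^{i}_{s'_{(i,t)}t}$.

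More concretely, I would first unpack the definition of the shadow interval. Since $s'_{(i,t)}$ is the largest index with $\sum_{s=s'_{(i,t)}}^{t} x_{st}^{i}\ge 1/2$, we have $\sum_{s=s'_{(i,t)}+1}^{t} x_{st}^{i} < 1/2$, and combining with $\sum_{s=1}^{t} x_{st}^{i} = 1$ yields
\[
\sum_{s=1}^{s'_{(i,t)}} x_{st}^{i} \;>\; \frac{1}{2}.
\]

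Next I would use monotonicity of holding costs: for every $s \le s'_{(i,t)}$, the nonincreasing property of $s \mapsto h_{st}^{i}$ gives $h_{st}^{i} \ge h_{s'_{(i,t)}t}^{i}$, and multiplying through by $d_{it}$ gives $H_{st}^{i} \ge H_{s'_{(i,t)}t}^{i}$. Then
\[
\sum_{s=1}^{t} H_{st}^{i}\,x_{st}^{i} \;\ge\; \sum_{s=1}^{s'_{(i,t)}} H_{st}^{i}\,x_{st}^{i} \;\ge\; H_{s'_{(i,t)}t}^{i} \sum_{s=1}^{s'_{(i,t)}} x_{st}^{i} \;>\; \frac{1}{2}\,H_{s'_{(i,t)}t}^{i},
\]
which rearranges to the claimed bound. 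No obstacle is expected here; the argument is a one-shot chain of inequalities. The only subtlety is making sure the first constraint of (IP)/(LP) is actually an equality on the $\gamma$-approximate LP solution, which it is since the constraint in the LP is $\sum_{s=1}^{t} x_{st}^{i}=1$ (and in any case $\ge 1$ suffices for the argument above).
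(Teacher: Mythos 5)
Your proposal is correct and follows essentially the same argument as the paper's proof: both derive $\sum_{s=1}^{s'_{(i,t)}} x_{st}^{i}\ge 1/2$ from the definition of the shadow interval and the LP equality constraint, then use monotonicity of $H^i_{st}$ in $s$ to bound $H^i_{s'_{(i,t)}t}$ by twice the LP holding cost. The only cosmetic difference is that you write the chain of inequalities in the reverse direction (lower-bounding the sum rather than upper-bounding $H^i_{s'_{(i,t)}t}$), and you obtain the slightly stronger strict inequality, neither of which changes the substance.
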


\begin{proof}
By the construction of shadow intervals, we have $\sum_{s=s^{\prime}_{(i,t)}+1}^{t}x_{st}^{i} < 1/2$. Now since $\sum_{s=1}^{t}x_{st}^{i}=1$ by the first constraint in (\ref{IP}), we must have 
$\sum_{s=1}^{s^{\prime}_{(i,t)}} x_{st}^{i} \ge 1/2$. Hence we have
\begin{eqnarray}
H_{s^{\prime}_{(i,t)}t}^{i} \le 2 \cdot \sum_{s=1}^{s^{\prime}_{(i,t)}} H_{s^{\prime}_{(i,t)}t}^{i}x_{st}^{i} \le 2 \cdot \sum_{s=1}^{s^{\prime}_{(i,t)}} H_{st}^{i}x_{st}^{i} \le  2 \cdot \left(\sum_{s=1}^{s^{\prime}_{(i,t)}} H_{st}^{i}x_{st}^{i} + \sum_{s=s^{\prime}_{(i,t)}+1}^{t} H_{st}^{i}x_{st}^{i} \right) = 2\cdot \sum_{s=1}^{t} H_{st}^{i}x_{st}^{i},
\end{eqnarray}
where the second inequality is due to  $H_{s^{\prime}_{(i,t)}t} \le H_{st}$ for all $s\le s^{\prime}_{(i,t)}$ by monotonicity of holding costs.
\end{proof}

\begin{lemma} \label{lemma_holding_poly}
The total holding cost for the solution found by the LP-rounding algorithm is at most 
$$
\mathcal{O}( \sqrt{\log T}) \cdot \sum_{(i,t)\in \mathcal{D}}  \sum_{s=1}^{t} H_{st}^{i}x_{st}^{i}, 
$$
where $x_{st}^{i}$'s are in the  $\gamma$-approximate LP solution.
\end{lemma}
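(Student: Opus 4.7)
The plan is to bound the holding cost incurred by each individual demand point and then sum. Fix a demand point $(i,t)\in\mathcal{L}_m$. The first task is to show that the algorithm can serve $(i,t)$ from an order placed at some time $\tau_j \in [s^*_{(i,t)},t]\cap[1,T]$. This follows from a simple arithmetic observation: the tentative orders for group $\mathcal{L}_m$ are placed at the arithmetic progression $\tau_j = 1+jw_m$, so any interval of length $w_m$ must contain at least one $\tau_j$. Since the extended shadow interval $[s^*_{(i,t)},t]$ has length $l^*_{(i,t)}=w_m$ (by definition of $\mathcal{L}_m$, using $w_{k-1}=T$ for the top group, in which case $\tau_0=1$ lies trivially in the interval), some $\tau_j$ is in it. By monotonicity of $h^i_{s,t}$ in $s$, the holding cost of $(i,t)$ is therefore at most
\[
H^i_{s^*_{(i,t)},t} \;=\; d_{it}(t-s^*_{(i,t)})^\alpha\,\bar h^i_t \;=\; d_{it}(l^*_{(i,t)})^\alpha\,\bar h^i_t,
\]
using Assumption \ref{assump1}. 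The $m=0$ case is trivial since then $s^*_{(i,t)}=t$ and no holding cost is incurred.

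Next, I will compare $l^*_{(i,t)}$ to $l_{(i,t)}$. By construction, whenever $l_{(i,t)}\ge 1$ we have $l^*_{(i,t)} = \rho^m \wedge T$ where $m$ is the smallest integer with $\rho^m \ge l_{(i,t)}$. Thus $l^*_{(i,t)}\le \rho\cdot l_{(i,t)}$ in every case; the only edge case is $\mathcal{L}_{k-1}$ where $l^*_{(i,t)}=T$, but then $\rho^{k-2}<l_{(i,t)}\le T\le \rho^{k-1}$ so the ratio is still at most $\rho$. Raising to the $\alpha$-th power and combining with the previous display,
\[
(\text{holding cost of }(i,t)) \;\le\; \rho^\alpha\cdot d_{it}(l_{(i,t)})^\alpha\,\bar h^i_t \;=\; \rho^\alpha\cdot H^i_{s'_{(i,t)},t}.
\]

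Now I apply Lemma \ref{lemma_holding_1}, which gives $H^i_{s'_{(i,t)},t}\le 2\sum_{s=1}^t H^i_{st} x^i_{st}$. Summing over all $(i,t)\in \mathcal{D}$, the total algorithmic holding cost is at most $2\rho^\alpha \sum_{(i,t)\in\mathcal{D}}\sum_{s=1}^t H^i_{st} x^i_{st}$. Finally, substituting $\rho=\lfloor(\log T)^{1/(2\alpha)}\rfloor$ gives $\rho^\alpha\le \sqrt{\log T}$, which yields the claimed $\mathcal{O}(\sqrt{\log T})$ factor.

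The main step where care is needed is the ``covering'' argument in Step 1, because the last group $\mathcal{L}_{k-1}$ behaves differently (its shadow interval is capped at $T$ rather than being an exact power of $\rho$). One should verify separately that $\tau_0=1$ lies in $[s^*_{(i,t)},t]=[t-T,t]$ for every $(i,t)\in\mathcal{L}_{k-1}$, and that the bound $l^*_{(i,t)}\le\rho\cdot l_{(i,t)}$ still holds there; both are straightforward from $\rho^{k-1}\ge T$. All other pieces are arithmetic manipulation and a single application of Lemma \ref{lemma_holding_1}.
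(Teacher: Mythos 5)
Your proof is correct and follows essentially the same route as the paper's: bound each demand point's holding cost by $H^i_{s^*_{(i,t)}t}\le\rho^\alpha H^i_{s'_{(i,t)}t}$ via the polynomial cost structure, invoke Lemma \ref{lemma_holding_1}, and set $\rho=\lfloor(\log T)^{1/(2\alpha)}\rfloor$. The only difference is that you spell out two points the paper leaves implicit — the arithmetic-progression argument showing some $\tau_j$ lands in each extended shadow interval, and the edge case of the top group capped at $T$ — which is a welcome but not substantively different addition.
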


\begin{proof}
By the polynomial holding cost structure, for each demand point $(i,t)$ in some $\mathcal{L}_{m}$, we have 
\begin{eqnarray*}
H_{s^{*}_{(i,t)}t}^{i} &=& d_{t}^{i} (t-s^{*}_{(i,t)})^{\alpha} =  d_{t}^{i} (w_m)^{\alpha} \bar{h}^{i},  \\
H_{s^{\prime}_{(i,t)}t}^{i} &=& d_{t}^{i} (t- s^{\prime}_{(i,t)})^{\alpha} \ge  d_{t}^{i} (w_m/\rho)^{\alpha} \bar{h}^{i}, 
\end{eqnarray*}
where the inequality follows from the construction of extended shadow intervals. Hence it is clear that 
$$H_{s^{*}_{(i,t)}t}^{i} \le  \rho^{\alpha} H_{s^{\prime}_{(i,t)}t}^{i}.$$

By the LP-rounding algorithm, for each demand point $(i,t)$, we must have placed a (joint) order containing element $i$ inside its extended shadow interval $[s^{*}_{(i,t)},t]$. Due to monotonicity of holding costs, the worst-case (that gives the highest possible holding cost) happens when our algorithm places the order at exactly time period $s^{*}_{(i,t)}$ to satisfy the demand point $(i,t)$.  Hence, the total holding cost associated with the demand point $(i,t)$ is upper bounded by 
\begin{eqnarray*}
H_{s^{*}_{(i,t)}t}^{i} \le  \rho^{\alpha} H_{s^{\prime}_{(i,t)}t}^{i} \le 2\rho^{\alpha} \cdot \sum_{s=1}^{t} H_{st}^{i}x_{st}^{i},  
\end{eqnarray*}
where the second inequality follows from Lemma \ref{lemma_holding_1}. Now setting $\rho = \lfloor (\log T)^{1/(2\alpha)} \rfloor$ yields the result. 
\end{proof}

The intuition behind Lemma \ref{lemma_holding_poly} is that when we stretch the shadow interval to the nearest integer power of $\rho$, the holding cost within the extended shadow interval does not grow too large due to the polynomial holding cost structure. In particular it grows by at most a factor of $\mathcal{O}(\sqrt{\log T})$.  On the other hand, stretching the shadow intervals in this manner gives us a tighter bound on the ordering cost (as shown below).

\subsubsection*{Analysis of Ordering Cost}
To analyze the ordering cost component, we introduce the following bridging problem: 
\begin{eqnarray}
\label{Covering}
\text{\bf (Covering-LP)} \qquad \text{min} && \sum_{S\subseteq \mathcal{N}} \sum_{s=1}^{T} f(S)z_{s}^{S}  \\
\text{s.t.}  && \sum_{s=s^{*}_{(i,t)}}^{t} \sum_{S: i\in S \subseteq \mathcal{N} } z_{s}^{S} \ge 1, \qquad  \forall  (i,t) \in \mathcal{D}  \nonumber \\ 
&& z_{s}^{S} \ge 0, \qquad  \forall s =1,\ldots, t, \forall  S \subseteq \mathcal{N}. \nonumber
\end{eqnarray}
The intuition behind introducing this bridging problem is as follows: if our algorithm places an order to satisfy the demand within its extended shadow interval then Lemma \ref{lemma_holding_poly} implies that the holding cost can be bounded by $\mathcal{O}(\sqrt{\log T})$ times the LP holding cost. Thus, the problem reduces to finding a ``cover'' for these intervals as defined in Problem (\ref{Covering}). In the remainder of the worst-case analysis, we will focus on analyzing this Covering-LP.

\begin{lemma} \label{lemma_clp1}
The optimal objective value of the Covering-LP is at most 
$$
2\cdot \sum_{S\subseteq \mathcal{N}} \sum_{s=1}^{T} f(S)y_{s}^{S}.
$$
where $y_{s}^{S}$'s are in the  $\gamma$-approximate LP solution.
\end{lemma}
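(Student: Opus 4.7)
The plan is to exhibit an explicit feasible solution to the Covering-LP directly from the $\gamma$-approximate LP solution $(y,x)$, by a simple scaling. Specifically, I would define $z_s^S := 2\,y_s^S$ for all $s\in\{1,\ldots,T\}$ and $S\subseteq\mathcal{N}$, and then verify both feasibility and the cost bound.

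For feasibility, fix any demand point $(i,t)\in\mathcal{D}$. By construction, the extended shadow interval contains the shadow interval, i.e.\ $s^*_{(i,t)}\le s'_{(i,t)}$, and from the definition of $s'_{(i,t)}$ we have $\sum_{s=s'_{(i,t)}}^{t} x_{st}^{i}\ge 1/2$. Hence
\[
\sum_{s=s^*_{(i,t)}}^{t} x_{st}^{i}\;\ge\;\sum_{s=s'_{(i,t)}}^{t} x_{st}^{i}\;\ge\;\tfrac{1}{2}.
\]
The second constraint of the LP relaxation of (\ref{IP}) gives $x_{st}^{i}\le \sum_{S:\,i\in S\sse\mathcal{N}} y_s^S$ for every $s\le t$, so summing over $s\in[s^*_{(i,t)},t]$ yields
\[
\sum_{s=s^*_{(i,t)}}^{t}\sum_{S:\,i\in S\sse\mathcal{N}} y_s^S \;\ge\; \tfrac{1}{2},
\]
which, after multiplication by $2$, is exactly the covering constraint for $(i,t)$ with the variables $z_s^S=2y_s^S$. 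Non-negativity of $z$ follows from that of $y$.

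For the objective bound, the cost of this $z$ is $\sum_{S\sse\mathcal{N}}\sum_{s=1}^{T} f(S)\,z_s^S \;=\; 2\sum_{S\sse\mathcal{N}}\sum_{s=1}^{T} f(S)\,y_s^S$, so the optimal value of Covering-LP is at most this quantity, proving the lemma.

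There is essentially no obstacle here: the lemma is precisely the observation that stretching the shadow interval can only enlarge the fractional mass of the LP solution inside it, so a factor-$2$ blow-up of the LP setup variables already fractionally covers every demand point. The more delicate step is in the next lemma, where fractional subadditivity of $f$ will be invoked to round the Covering-LP, but this lemma itself is just the scaling argument above.
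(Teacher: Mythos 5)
Your proposal is correct and follows exactly the paper's own argument: scale the LP ordering variables by $2$ to get $\bar{z}_s^S = 2y_s^S$, use the second constraint of the LP relaxation together with $\sum_{s=s^*_{(i,t)}}^{t} x_{st}^i \ge \sum_{s=s'_{(i,t)}}^{t} x_{st}^i \ge 1/2$ to verify the covering constraints, and read off the factor-$2$ cost bound. Nothing is missing.
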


\begin{proof}
We first check $\bar{z}_{s}^{S} = 2y_{s}^{S}$ (where $y_{s}^{S}$ is the  $\gamma$-approximate LP solution) is feasible to the Covering-LP defined in (\ref{Covering}). It is obvious that $\bar{z}_{s}^{S} = 2y_{s}^{S} \ge 0$ for all $s =1,\ldots, t$ and for all $S \subseteq \mathcal{N}$. It suffices to verify the first set of constraints. Indeed, for each $(i,t) \in \mathcal{D}$, we have
\begin{equation}
\sum_{s=s^{*}_{(i,t)}}^{t} \sum_{S: i\in S \subseteq \mathcal{N} } \bar{z}_{s}^{S}=\sum_{s=s^{*}_{(i,t)}}^{t} \sum_{S: i\in S \subseteq \mathcal{N} } 2y_{s}^{S} \ge 
\sum_{s=s^{*}_{(i,t)}}^{t} 2x_{st}^{i} \ge 1,
\end{equation}
where the first inequality follows from the second constraint in (\ref{IP}), and the second inequality follows from the fact that $\sum_{s=s^{*}_{(i,t)}}^{t} x_{st}^{i} \ge \sum_{s=s^{\prime}_{(i,t)}}^{t} x_{st}^{i} \ge 1/2$ (by the definition of shadow intervals and their extensions).

Hence, the optimal objective value of the Covering-LP 
\begin{equation}
\sum_{S\subseteq \mathcal{N}} \sum_{s=1}^{T} f(S)z_{s}^{S} \le \sum_{S\subseteq \mathcal{N}} \sum_{s=1}^{T} f(S)\bar{z}_{s}^{S} = 2\sum_{S\subseteq \mathcal{N}} \sum_{s=1}^{T} f(S)y_{s}^{S},
\end{equation}
where the first inequality follows from that $z_{s}^{S}$ is optimal while $\bar{z}_{s}^{S}$ is feasible.
\end{proof}

Fix an $m\in \{0,1,\ldots, k-1\}$ and we focus our attention on the demand group $\mathcal{L}_{m}$ (which have equal length of shadow intervals). We shall show that the total ordering cost associated with the set $\mathcal{L}_{m}$ by our LP-rounding algorithm can be upper bounded by $2\beta$ times the Covering-LP cost. Our proof strategy relies on the notion of approximate fractional subadditivity (see Definition \ref{def_afs}).

\begin{lemma} \label{lemma_clp2}
The total ordering cost associated with the set $\mathcal{L}_{m}$ by our LP-rounding algorithm is at most 
$$
2 \beta \cdot \sum_{S\subseteq \mathcal{N}} \sum_{s=1}^{T} f(S)z_{s}^{S}.
$$
where $z_{s}^{S}$'s are the optimal Covering-LP solutions.
\end{lemma}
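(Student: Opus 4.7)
The plan is to invoke $\beta$-approximate fractional subadditivity of $f$ (Definition~\ref{def_afs}) on each joint order $A_m^j$ produced by the algorithm for group $\mathcal{L}_m$, using restrictions of the Covering-LP solution $z$ to a local window around the order time $\tau_j$ as the fractional cover, and then to sum over $j$ so that each period $s$ is counted at most twice. Fix $m$; recall that every demand in $\mathcal{L}_m$ has an extended shadow interval of length exactly $w_m$, and orders are placed at $\tau_j = 1 + j w_m$.

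For each $i \in A_m^j$, pick a witnessing demand $(i, t_i^j) \in \mathcal{L}_m$ such that $\tau_j \in [s^*_{(i, t_i^j)}, t_i^j]$. Since this interval has length $w_m$ and contains $\tau_j$, it sits inside the window $U_j := [\tau_j - w_m, \tau_j + w_m]$. Feasibility of $z$ in the Covering-LP then gives
\[
\sum_{s \in U_j} \sum_{S \ni i} z_s^S \;\ge\; \sum_{s = s^*_{(i, t_i^j)}}^{t_i^j} \sum_{S \ni i} z_s^S \;\ge\; 1,
\]
so the weights $\{z_s^S : s \in U_j,\, S \subseteq \mathcal{N}\}$ form a valid fractional cover of $A_m^j$ (capping individual weights at $1$ whenever necessary, which is harmless by monotonicity of $f$). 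Applying Definition~\ref{def_afs} yields $f(A_m^j) \le \beta \sum_{s \in U_j} \sum_S z_s^S f(S)$.

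Summing over all $j$ with $\tau_j \le T$ gives
\[
\sum_j f(A_m^j) \;\le\; \beta \sum_{s, S} z_s^S f(S) \cdot \bigl|\{j : s \in U_j\}\bigr|,
\]
so it remains to show $|\{j : s \in U_j\}| \le 2$. Because consecutive $\tau_j$'s are spaced exactly $w_m$ apart, at most two of them lie in any interval of length $2w_m$ once we switch to a half-open form such as $U_j = (\tau_{j-1}, \tau_{j+1}]$, which still contains every shadow interval in $A_m^j$ except when the witnessing interval coincides exactly with $[\tau_{j-1}, \tau_j]$. Such a boundary demand contributes to both $A_m^{j-1}$ and $A_m^j$; I would handle it by choosing the witness for $A_m^j$ among demands whose interval lies strictly inside $U_j$ whenever an alternative exists, and otherwise observing that the missing endpoint's LP mass is already captured in $U_{j-1}$, so the factor of $2$ is preserved. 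Combining everything gives the claimed $\sum_j f(A_m^j) \le 2\beta \sum_{s, S} f(S) z_s^S$.

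The main obstacle is pinning the overlap count down to exactly $2$ rather than $3$: a naive closed window of length $2w_m+1$ double-counts at the positions $s = \tau_k$, producing an extra factor. The refined half-open window, together with the careful treatment of the "Case~2" boundary demands whose shadow intervals coincide with the inter-order gaps $[\tau_{j-1}, \tau_j]$ and therefore show up in two consecutive $A_m^j$'s, is the key technical step.
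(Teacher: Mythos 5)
Your proof follows essentially the same route as the paper's: restrict the optimal Covering-LP solution to the window $(\tau_{j-1},\tau_{j+1}]$, observe that the extended shadow intervals witnessing membership in $A_m^j$ provide a fractional cover of $A_m^j$ within that window, apply $\beta$-approximate fractional subadditivity to get $f(A_m^j)\le\beta\sum_{s>\tau_{j-1}}^{\tau_{j+1}}\sum_S z_s^S f(S)$, and sum over $j$ with each period counted at most twice. The only place you depart from the paper is the boundary case you flag, and there your patch is not airtight: if the \emph{only} demand witnessing $i\in A_m^j$ has extended interval exactly $[\tau_{j-1},\tau_j]$ (i.e.\ $t=\tau_j$) and its Covering-LP mass is concentrated at $s=\tau_{j-1}$, then no fractional cover of $A_m^j$ exists inside the half-open window $(\tau_{j-1},\tau_{j+1}]$ at all; observing that this mass is ``already captured in $U_{j-1}$'' helps bound $f(A_m^{j-1})$ but does nothing for $f(A_m^j)$. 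To be fair, the paper's own proof silently asserts that every witnessing interval lies in $(\tau_{j-1},\tau_{j+1}]$, which fails for precisely this interval, so the same off-by-one is present there. The clean resolutions are either to use the closed window $[\tau_{j-1},\tau_{j+1}]$ and accept the constant $3\beta$ (immaterial to Theorem~\ref{mainresult}), or to redefine $A_m^j$ so that an element whose witnessing interval already contains $\tau_{j-1}$ is handled by the order at $\tau_{j-1}$ and not re-witnessed at $\tau_j$. Up to this constant, your argument is sound and matches the paper's.
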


\begin{proof}
Recall that for each demand group $\mathcal{L}_{m}$, the LP-rounding algorithm places a tentative (joint) order in each period $\tau_{j} \le T$ $(j=0,1,\ldots)$. Then in each time period $\tau_{j}$ the algorithm identifies the elements $A^j_m$ within $\mathcal{L}_{m}$ whose shadow intervals contain (or intersect with) $\tau_{j}$ 
and places an actual (joint) order $A_{m}^{j}$ that includes all of these ``intersecting" elements. 

Now take any $\tau_{j} \le T$: since the length of the shadow intervals in $\mathcal{L}_{m}$ is exactly $w_m$, all the shadow intervals associated with the order $A_{m}^{j}$ must lie within the interval $(\tau_{j-1}, \tau_{j+1}]$ (see Figure \ref{fig2} as an example). Our LP-rounding algorithm places an actual (joint) order $A_{m}^{j}$ in period $\tau_{j}$ and incurs an ordering cost $f(A_{m}^{j})$. We will show that  the Covering-LP provides us with a fractional cover of $A_{m}^{j}$ which will be used to upper bound $f(A_{m}^{j})$.

Indeed, for each demand point $(i,t)$ associated with the order $A_{m}^{j}$, we have
\begin{eqnarray}
\sum_{S:i\in S \subseteq \mathcal{N}} \sum_{s>\tau_{j-1}}^{\tau_{j+1}} z_{s}^{S} =  \sum_{s>\tau_{j-1}}^{\tau_{j+1}}  \sum_{S:i\in S \subseteq \mathcal{N}} z_{s}^{S} 
\ge \sum_{s=s^{\prime}_{(i,t)}}^{t} \sum_{S:i\in S \subseteq \mathcal{N}} z_{s}^{S} \ge 1,
\end{eqnarray}
where the first inequality holds because every shadow interval associated with the order $A_{m}^{j}$ must lie within the interval $(\tau_{j-1}, \tau_{j+1}]$ and the last inequality follows from the first constraint in the Covering-LP (\ref{Covering}).

Since $f(\cdot)$ is $\beta$-approximate fractional subadditive, then according to Definition \ref{def_afs}, 
\begin{eqnarray}
f(A_{m}^{j}) \le \beta \cdot \sum_{S: S \subseteq \mathcal{N}} \sum_{s>\tau_{j-1}}^{\tau_{j+1}} z_{s}^{S} f(S).
\end{eqnarray}
It is then immediate that the ordering cost associated with the set $\mathcal{L}_{m}$ by our LP-rounding algorithm 
\begin{eqnarray}
\sum_{j\ge 0}f(A_{m}^{j}) \,\, \le  \,\, \beta \cdot \sum_{S: S \subseteq \mathcal{N}} \sum_{j\ge 0} \sum_{s>\tau_{j-1}}^{\tau_{j+1}} z_{s}^{S} f(S)  \,\, \le  \,\,  2 \beta \cdot \sum_{S: S \subseteq \mathcal{N}} \sum_{s=1}^{T} z_{s}^{S} f(S).
\end{eqnarray}
\end{proof}

\begin{lemma} \label{lemma_clp3}
The total ordering cost for the solution by our LP-rounding algorithm is at most 
$$
\mathcal{O}\left(\alpha \beta \frac{\log T}{\log \log T}\right) \cdot \sum_{S\subseteq \mathcal{N}} \sum_{s=1}^{T} f(S)y_{s}^{S},
$$
where $y_{s}^{S}$'s are in the  $\gamma$-approximate LP solution.
\end{lemma}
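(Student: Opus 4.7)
The plan is to assemble Lemma~\ref{lemma_clp3} directly from Lemmas~\ref{lemma_clp1} and~\ref{lemma_clp2} together with the choice of $\rho$. The algorithm partitions demand points into $k$ groups $\mathcal{L}_0, \mathcal{L}_1, \ldots, \mathcal{L}_{k-1}$, and for each group it places a separate family of joint orders. So the total ordering cost returned by the algorithm is at most $\sum_{m=0}^{k-1} \sum_{j\ge 0} f(A_m^j)$, which is already in exactly the shape addressed by Lemma~\ref{lemma_clp2}.

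First I would apply Lemma~\ref{lemma_clp2} once for every $m \in \{0,1,\ldots,k-1\}$ and sum, giving
\[
\sum_{m=0}^{k-1}\sum_{j\ge 0} f(A_m^j) \;\le\; 2\beta\, k \cdot \sum_{S\subseteq \mathcal{N}}\sum_{s=1}^T f(S)\, z_s^S,
\]
where $(z_s^S)$ is the optimal Covering-LP solution. Then I would apply Lemma~\ref{lemma_clp1} to bound the Covering-LP value by $2\sum_{S,s} f(S) y_s^S$, yielding the overall bound $4\beta k \cdot \sum_{S,s} f(S)\, y_s^S$.

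The last step is to verify that $k = \mathcal{O}\bigl(\alpha\,\tfrac{\log T}{\log\log T}\bigr)$. Recall $\rho = \lfloor (\log T)^{1/(2\alpha)} \rfloor$ and $k = 1 + \lceil \log_\rho T\rceil$. For $T$ sufficiently large we have $\log \rho \ge \tfrac{1}{4\alpha}\log\log T$ (the factor loss is from the floor, which is harmless for $\log T \ge 2^{2\alpha}$; the small-$T$ regime can be absorbed into the constant). Hence $\log_\rho T = \log T / \log \rho \le 4\alpha\,\log T/\log\log T$, and substituting this into the bound from the previous step gives the claimed approximation factor.

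There is essentially no obstacle here — all the work has been done in the previous three lemmas. The only piece requiring minor care is the calibration of $\rho$, which must be tight enough so that the number of groups $k$ matches the $\mathcal{O}(\log T/\log\log T)$ bound on the holding side established in Lemma~\ref{lemma_holding_poly}, so that the two costs balance. The choice $\rho = \lfloor (\log T)^{1/(2\alpha)}\rfloor$ was made precisely to achieve $\rho^\alpha = \mathcal{O}(\sqrt{\log T})$ on the holding side and $\log_\rho T = \mathcal{O}(\alpha\log T/\log\log T)$ on the ordering side, and both these bounds therefore combine cleanly.
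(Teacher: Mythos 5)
Your proposal is correct and follows essentially the same route as the paper: apply Lemma~\ref{lemma_clp2} to each of the $k$ groups, chain with Lemma~\ref{lemma_clp1} to get a $4\beta$ factor per group, and multiply by $k=\mathcal{O}\bigl(\alpha\frac{\log T}{\log\log T}\bigr)$. Your explicit verification of the bound on $k$ from the choice $\rho=\lfloor(\log T)^{1/(2\alpha)}\rfloor$ is a detail the paper leaves implicit, but the argument is the same.
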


\begin{proof}
By Lemmas \ref{lemma_clp1} and \ref{lemma_clp2}, for each group $\mathcal{L}_{m}$ $(m=0,1, \ldots,k-1)$, we conclude that the total ordering cost associated with the set $\mathcal{L}_{m}$ in our LP-rounding algorithm is at most 
$$
2\beta \cdot \sum_{S\subseteq \mathcal{N}} \sum_{s=1}^{T} f(S)z_{s}^{S} \le 4 \beta \cdot \sum_{S\subseteq \mathcal{N}} \sum_{s=1}^{T} f(S)y_{s}^{S},
$$ 
where $z_{s}^{S}$'s are the optimal Covering-LP solution and $y_{s}^{S}$'s are the  $\gamma$-approximate LP solution. Then the result follows from the fact that the number of groups $k = \mathcal{O}\left( \alpha \frac{\log T}{\log \log T}\right)$.
\end{proof}

Now we are ready to prove our main result Theorem \ref{mainresult}.

\noindent{\bf Proof of Theorem \ref{mainresult}:}
Combining the results from Lemmas \ref{lemma_holding_poly} and \ref{lemma_clp3}, the total holding and ordering costs for the solution by our LP-rounding algorithm is at most
$\mathcal{O}\left(\alpha \beta \frac{\log T}{\log \log T}\right)$ times the  $\gamma$-approximate LP solution. 
\hfill$\blacksquare$

\noindent 
{\bf Remark:} The $\mathcal{O}\left(\frac{\log T}{\log \log T}\right)$ approximation ratio is the best tradeoff achievable (in our approach) between the loss in holding and ordering costs, even under linear holding costs. Recall that for a given set $W$ of widths for extended shadow intervals, the loss in ordering cost is just the number $|W|$ of distinct widths and the loss in holding cost depends on the aggregate stretch-factor incurred when the width of each shadow interval is increased to a value in $W$. Even if we allow for an arbitrary set $W$ of widths (that may depend on the LP solution) and compute the worst ratio (using a ``factor revealing linear program'' as in~\cite{JMMSV03}) then we obtain $\mathcal{O}\left(\frac{\log T}{\log \log T}\right)$ as the approximation ratio.

\subsubsection*{A Special Case with Perishable Goods} We now consider a special holding cost which models perishable items with a fixed life-time $c>0$. For each demand point $(i,t)$, we can only start satisfying this order $c$ periods before $t$, i.e., the ordering window is $[t-c, t]$. This setting is equivalent to the following holding cost structure. For each $i \in \mathcal{N}$ and $1\le s \le t \in \mathcal{T}$, 
\begin{equation*}
h_{st}^{i} =
\begin{cases}
0 & \text{if } t-c \le s \le t ,\\
\infty & \text{if } s < t-c.
\end{cases}
\end{equation*}
We also have $H_{st}^{i} =d_{t}^{i}h_{st}^{i}$.

In this setting, for each demand point $(i,t)$, the extended shadow interval is simply $[t-c, t]$ with length $c$. Hence our LP-rounding algorithm and its worst-analysis will apply with just a single group, and we obtain:
\begin{theorem} \label{thm_perishable}
When items are perishable with a fixed life-time and the holding cost is negligible, the LP-rounding algorithm gives a $2$-approximation for the submodular JRP, and a $(6+o(1))$-approximation for the IRP.
\end{theorem}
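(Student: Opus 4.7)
My plan is to run the LP-rounding algorithm with a single group, specializing each step to exploit the perishable structure, and to sharpen the two factors of $2$ appearing in the general analysis.

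First, note that any feasible LP solution $(y,x)$ must set $x_{st}^i = 0$ whenever $s < t-c$, otherwise the LP objective is infinite. In the algorithm I would therefore skip Step 1 entirely and declare the extended shadow interval of every demand $(i,t)$ to be the whole ordering window $[t-c,t]$. All such intervals share the same length $c$, so Step 2 produces a single group $\mathcal{L}_0$ with width $w_0 = c$, and Step 3 places tentative orders at $\tau_j = 1 + jc$ for $j=0,1,\ldots$ Because the window $[t-c,t]$ has length $c$ and the $\tau_j$'s are spaced $c$ apart, every window contains at least one $\tau_j$; consequently each demand is served \emph{within} its window and the holding cost vanishes.

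For the ordering cost, the bound of Lemma \ref{lemma_clp1} sharpens from a factor $2$ to a factor $1$ in this regime: since $\sum_{s=t-c}^{t} x_{st}^i = \sum_{s=1}^{t} x_{st}^i = 1$ in any finite-cost LP solution, the vector $\bar z_s^S := y_s^S$ is directly feasible for the Covering-LP (whose constraint here reads $\sum_{s=t-c}^{t} \sum_{S:i\in S} z_s^S \ge 1$), so the Covering-LP optimum is at most the LP ordering cost. Lemma \ref{lemma_clp2} then applies verbatim with $k=1$ and gives total ordering cost at most $2\beta \cdot \text{Covering-LP} \le 2\beta \cdot \text{LP ordering cost} \le 2\beta\gamma \cdot \mathrm{OPT}$.

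Plugging in parameters produces the stated bounds: $2\beta\gamma = 2\cdot 1\cdot 1 = 2$ for the submodular JRP, and $2\beta\gamma = 2\cdot 1.5 \cdot (2+o(1)) = 6+o(1)$ for the IRP. The only subtlety worth checking carefully is the elimination of the factor of $2$ from Lemma \ref{lemma_clp1}: in the perishable setting the window $[t-c,t]$ captures the \emph{full} unit of LP mass rather than just the half captured by a shadow interval, which is what allows $\bar z = y$ (instead of $\bar z = 2y$) to be feasible; everything else is a routine specialization of the general analysis with a single group.
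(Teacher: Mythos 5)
Your proposal is correct and follows the same route the paper intends (a single group with width $c$ and windows $[t-c,t]$, then the general ordering-cost analysis with $k=1$). You also correctly supply the one detail the paper leaves implicit but that is necessary to get the stated constants: since any finite-cost LP solution has $x_{st}^i=0$ for $s<t-c$, the full unit of LP mass lies in the window, so $\bar z = y$ (not $2y$) is feasible for the Covering-LP, turning the generic $4\beta\gamma$ bound into $2\beta\gamma$, i.e.\ $2$ for the JRP and $6+o(1)$ for the IRP.
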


\section{Solving the LP Relaxation} \label{sec_solve}
As mentioned earlier in Section \ref{sec_model}, the LP relaxation has an exponential number of variables and we first argue that there is an efficient way of solving this LP. We can readily write the dual of (LP) as 
\begin{eqnarray}
\label{Dual-LP}
\text{\bf (DLP)} \qquad \text{max} && \sum_{(i,t) \in \mathcal{D}} b_{t}^{i} \\
\text{s.t.} && b_{t}^{i} \le H_{st}^{i} + \bar{b}_{st}^{i}, \qquad \forall  (i,t) \in \mathcal{D}, \forall s =1,\ldots, t   \nonumber \\ 
&& f(S) - \sum_{i \in S}\sum_{t=s}^{T} \bar{b}_{st}^{i} \ge 0, \qquad  \forall s =1,\ldots, T, \forall S \subseteq \{1,\ldots N\}  \nonumber \\ 
&& \bar{b}_{st}^{i} \ge 0, \qquad  \forall  (i,t) \in \mathcal{D}, \forall s =1,\ldots, t. \nonumber
\end{eqnarray}
The $b_{t}^{i}$ and $\bar{b}_{st}^{i}$ are the dual variable corresponding to the first and second constraints in the LP relaxation of (\ref{IP}). Note that the dual formulation (\ref{Dual-LP}) has an exponential number of constraints. 

\subsubsection*{Submodular JRP}
In the submodular JRP, the left hand side of the second constraint $f(S) - \sum_{i \in S}\sum_{t=s}^{T} \bar{b}_{st}^{i}$ is clearly submodular. Thus, there is an efficient separation oracle by using submodular function minimization~\cite{S-book} to find violated constraints. This implies that the dual problem (and therefore the primal) can be efficiently solved using the ellipsoid method. This was also discussed in \cite{CELS2014}. 

\subsubsection*{Approximately Solving the LP for IRP}
The TSP costs $f(\cdot)$ are not submodular, and in fact the above separation problem is NP-hard. However, there is an approximate separation oracle (see Lemma \ref{oracle}) which suffices to compute an approximately optimal solution to (DLP).  The main ingredient is an approximation algorithm for the following auxiliary problem:
\begin{definition}[Minimum ratio TSP]
The input is a metric $(V,w)$ with a designated depot $r\in V$ and rewards $a:V\rightarrow \mathbb{R}_+$ on vertices. The goal is to find a subset $S\sse V$ that minimizes: 
$$\frac{f(S)}{a(S)}, \qquad \mbox{where $f(S)$ is the TSP cost as defined in~\eqref{def:tsp} and } a(S)=\sum_{i\in S} a_i.$$
\end{definition}

\begin{theorem}[Garg \cite{Garg2005}]\label{thm:rTSP}
There is a $(2+o(1))$-approximation algorithm for the minimum ratio TSP problem.
\end{theorem}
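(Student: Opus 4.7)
The plan is to reduce minimum ratio TSP to a family of quota-constrained TSP problems and then solve each via Lagrangian relaxation coupled with a primal-dual $2$-approximation for prize-collecting TSP, echoing Garg's approach to $k$-MST.

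First, I would approximately guess the reward collected by the optimum. Let $S^\star$ be an optimizer with $A^\star := a(S^\star)$ and pick $\epsilon=o(1)$. Enumerating $A$ over the geometric grid $\{(1+\epsilon)^j\}$ covering the range $[\min_i a_i,\,a(V)]$ incurs only a polylogarithmic overhead, and some choice of $A$ satisfies $A\le A^\star \le (1+\epsilon)A$. For each such guess $A$ I would approximately solve the quota problem $\min\{f(S):a(S)\ge A\}$, and then report the set $\widehat S$ minimizing the empirical ratio $f(\widehat S)/a(\widehat S)$ across all trials. A $c$-approximation for the quota problem at the right $A$ immediately yields a $c(1+\epsilon)$-approximation for the ratio problem.

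Second, for a fixed quota $A$, I would perform a binary search over a Lagrange multiplier $\lambda\ge 0$ and at each step approximately minimize
$$\mathcal{L}(\lambda,S) \;=\; f(S) \;-\; \lambda\bigl(a(S)-A\bigr).$$
Minimizing $\mathcal{L}(\lambda,\cdot)$ is a prize-collecting rooted TSP in which skipping vertex $v$ incurs penalty $\lambda a_v$; the Goemans--Williamson primal-dual method yields a $2$-approximation together with a matching LP dual certificate. The search terminates with two candidate sets $S_1,S_2$ coming from adjacent values of $\lambda$ and satisfying $a(S_1)<A\le a(S_2)$, whose costs are simultaneously controlled by the primal-dual analysis against the LP relaxation of the quota problem.

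The main obstacle is the third step: combining $S_1$ and $S_2$ into a single feasible tour whose cost is $(2+o(1))\cdot\mathrm{OPT}_{\mathrm{quota}}(A)$. Simply returning $S_2$ and doubling its minimum spanning tree gives only a $4$-approximation, which is not good enough. The tighter $(2+o(1))$ factor follows Garg's trick of patching a carefully chosen sub-tour of $S_1$ into $S_2$ via short reroutings through the depot $r$, and charging the extra cost against slack in the primal-dual dual variables; a small additive $o(1)$ loss is incurred by the binary search discretization in $\lambda$. Combining the three steps gives $(2+o(1))(1+\epsilon)=2+o(1)$ overall, establishing the claim.
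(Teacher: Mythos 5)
Your overall plan --- reduce the ratio problem to a quota problem by guessing the optimal reward level, then solve the quota problem to within a factor close to $2$ --- is sound in its first step: if $A \le a(S^\star) \le (1+\epsilon)A$ and you have a $c$-approximation for $\min\{f(S) : a(S)\ge A\}$, you indeed get a $c(1+\epsilon)$-approximation for the ratio, and the geometric grid over $A$ costs only a polynomial number of trials. The problem is the second and third steps. The Lagrangian-relaxation route through prize-collecting TSP is known to terminate with two solutions $S_1,S_2$ straddling the quota, and the standard way of combining them does \emph{not} give a factor $2$: the Chudak--Roughgarden--Williamson-style analysis yields a factor $5$ for $k$-MST, and closing the gap to $2$ is precisely the technical core of Garg's result --- it requires modifying the primal-dual growth process itself, not merely ``patching a carefully chosen sub-tour of $S_1$ into $S_2$ via short reroutings through the depot.'' As written, your step three asserts the hardest part of the theorem without proof; the phrase ``charging the extra cost against slack in the primal-dual dual variables'' is exactly the claim that needs a page of argument and does not follow from the LMP property of the Goemans--Williamson algorithm alone. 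So there is a genuine gap: you are in effect re-deriving Garg's $k$-TSP approximation from scratch and skipping its crux.

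The paper avoids this entirely by taking the $2$-approximation for (unweighted) $k$-TSP as a black box and reducing the weighted quota problem to it by a scaling argument: guess the maximum-reward vertex $u$ in the optimum, discard heavier vertices, round each reward $a_v$ down to an integer multiple $\bar{a}_v$ of $a_u/n^2$, replace each vertex by $\bar{a}_v$ co-located copies, and run $k$-TSP for every $k\le n^3$; the rounding loses only a $1/(1-1/n)$ factor, giving $2+o(1)$. If you want to keep your structure, the clean fix is to do the same: after guessing $A$, scale and round the rewards to small integers and invoke the known $k$-TSP guarantee for the resulting multiset instance, rather than re-proving that guarantee via Lagrangian relaxation.
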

\begin{proof}
The algorithm for minimum ratio TSP uses the $2$-approximation algorithm for the related $k$-TSP problem (i.e. given a metric, depot $r$ and target $k$, find a minimum length tour from the depot that visits at least $k$ vertices). The algorithm which is based on standard scaling arguments, is given below for completeness:
\begin{enumerate}
\item Guess (by enumerating over $|V|$ choices) the maximum reward vertex $u$ in an optimal solution.
\item Remove all vertices with reward more than $a_u$.
\item For each $v\in V$ set its new reward $\bar{a}_v$ to be the largest integer such that $\bar{a}_v\cdot \frac{a_u}{n^2}\le a_v$.
\item For each $k=1,\cdots, n^3$, run the $k$-TSP algorithm with target $k$ on the modified metric containing $\bar{a}_v$ co-located vertices at each $v\in V$.
\item Output the best ratio solution found (over all choices of $u$ and $k$).
\end{enumerate}
It is easy to see that this algorithm runs in polynomial time since each $\bar{a}_v \le n^2$. We now show that it has an approximation ratio of $\gamma=2+o(1)$. Let $S^*$ denote an optimal solution and $u\in S^*$ the maximum reward vertex. Consider the run of the above algorithm for this choice of $u$: note that none of the vertices from $S^*$ is removed. By the definition of new rewards, we have $a_v -\frac{a_u}{n^2} <  \bar{a}_v\cdot \frac{a_u}{n^2}\le a_v$ for all $v\in S^*$. So $\frac{a_u}{n^2}\sum_{v\in S^*} \bar{a}_v > a(S^*) - \frac{a_u}{n}\ge (1-1/n)a(S^*)$, which implies (as $\bar{a}$ is integer valued) that  $\bar{a}(S^*)\ge k := \lceil \frac{n^2}{a_u}(1-\frac1n)a(S^*)\rceil$. For this choice of $k$, the $k$-TSP algorithm is guaranteed to find a subset $S\sse V$ with $\bar{a}(S)\ge k$ and $f(S)\le 2\cdot f(S^*)$. The ratio of this solution is:
$$\frac{f(S)}{a(S)} \le \frac{2f(S^*)}{a(S)} \le \frac{n^2}{a_u}\cdot \frac{2f(S)}{\bar{a}(S)} \le \frac{2}{1-1/n}\cdot \frac{f(S^*)}{a(S^*)}.$$
Hence the above algorithm achieves a $2+o(1)$ approximation guarantee for minimum ratio TSP. 
\end{proof}

\def\bb{\mathbf{b}}
\def\bbar{\mathbf{\bar{b}}} 

Let $\gamma=2+o(1)$ denote the approximation guarantee for minimum ratio TSP from Theorem~\ref{thm:rTSP}.
We now show that this algorithm leads to an approximate separation oracle.

\begin{lemma}
\label{oracle}
There is a polynomial time algorithm, that given vectors $\bb= \{b^i_t : (i,t)\in \cD\}$ and $\bbar=\{\bar{b}^i_{st} : (i,t)\in\cD, 1\le s\le t\}$, outputs one of the following:
\begin{enumerate}
\item A constraint in (DLP) that is violated by $(\bb,\bbar)$.
\item Certificate that $(\frac1\gamma \bb,\, \frac1\gamma \bbar)$ is feasible in (DLP).
\end{enumerate}
\end{lemma}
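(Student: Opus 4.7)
The plan is to reduce the separation problem for the exponential family of TSP-constraints in (DLP) to $T$ instances of minimum ratio TSP, for which Theorem~\ref{thm:rTSP} provides a $\gamma$-approximation.

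First, the polynomially-many constraints are easy. Check each inequality $b^i_t \le H^i_{st}+\bar{b}^i_{st}$ and each non-negativity constraint $\bar{b}^i_{st}\ge 0$ directly; if any is violated by $(\bb,\bbar)$, output it as the separating constraint. From here on, assume all of these hold.

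It remains to handle the TSP-constraints $f(S) \ge \sum_{i\in S}\sum_{t=s}^T \bar{b}^i_{st}$ indexed by $s\in\{1,\ldots,T\}$ and $S\sse\cN$. For each fixed $s$, define rewards $a^{(s)}_i := \sum_{t=s}^T \bar{b}^i_{st}$ on $\cN$ (and set the depot's reward to $0$). Since $\bar{b}^i_{st}\ge 0$, all rewards are non-negative, so minimum ratio TSP applies. The constraint family at this $s$ is violated precisely when $\min_{S\sse\cN} f(S)/a^{(s)}(S) < 1$. Invoke Garg's algorithm (Theorem~\ref{thm:rTSP}) on each of the $T$ reward vectors to obtain subsets $S^{(s)}\sse\cN$ with
\[
\frac{f(S^{(s)})}{a^{(s)}(S^{(s)})} \;\le\; \gamma\cdot \rho^*_s,\qquad \rho^*_s := \min_{S\sse\cN}\frac{f(S)}{a^{(s)}(S)}.
\]
(When $a^{(s)}\equiv 0$, every TSP-constraint at $s$ is trivially satisfied and we skip that $s$.)

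Now apply the dichotomy. If some $s$ yields $f(S^{(s)}) < a^{(s)}(S^{(s)})$, the pair $(s,S^{(s)})$ is a violated (DLP) constraint and we output it. Otherwise $f(S^{(s)})\ge a^{(s)}(S^{(s)})$ for every $s$, so $\gamma\rho^*_s \ge 1$, which gives $\rho^*_s\ge 1/\gamma$ and hence $\gamma f(S)\ge a^{(s)}(S) = \sum_{i\in S}\sum_{t=s}^T \bar{b}^i_{st}$ for every $S\sse\cN$ and every $s$. Dividing by $\gamma$ shows that the scaled vector $\bbar/\gamma$ satisfies all TSP-constraints. The first family is preserved under scaling since $H^i_{st}\ge 0$ and $\gamma\ge 1$ imply $b^i_t/\gamma \le H^i_{st}/\gamma + \bar{b}^i_{st}/\gamma \le H^i_{st} + \bar{b}^i_{st}/\gamma$; non-negativity is obviously preserved. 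Therefore $(\bb/\gamma,\bbar/\gamma)$ is feasible in (DLP), yielding the required certificate.

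The only nontrivial ingredient is Theorem~\ref{thm:rTSP}; once that is in hand, the argument is essentially the standard Lagrangian/ratio trick, and the main subtlety is simply ensuring that the certificate for the ``no violation found'' branch really does scale every constraint correctly, which is immediate because $H^i_{st}$ (the term not multiplied by a dual variable) is non-negative and $\gamma\ge 1$.
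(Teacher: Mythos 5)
Your proposal is correct and follows essentially the same route as the paper's proof: verify the polynomially many constraints directly, then for each period $s$ build a minimum ratio TSP instance with rewards $a^{(s)}_i=\sum_{t=s}^T\bar{b}^i_{st}$, use Garg's $\gamma$-approximation either to exhibit a violated constraint or to certify $\rho^*_s\ge 1/\gamma$, and observe that scaling by $1/\gamma$ preserves the remaining constraints because $H^i_{st}\ge 0$. The only differences are cosmetic (you spell out the degenerate case $a^{(s)}\equiv 0$ and phrase the first-family check via $\gamma\ge 1$ rather than the paper's $\max\{0,\cdot\}$ bound), so there is nothing to add.
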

\begin{proof}
The first set of constraints in (DLP) and non-negativity of $\bar{b}^i_{st}$ are easy to verify since they are polynomial in number. 
Below we assume that these are satisfied.

In order to verify the second set of constraints in (DLP), we use Theorem~\ref{thm:rTSP}. For each $s\in [T]$ define an instance $\cI_s$ of minimum ratio TSP on metric $(V,w)$, depot $r$ and with rewards $a_v:=\sum_{t=s}^T \bar{b}^v_{st}$ for all $v\in V$. Let $A_s$ denote the solution found by the approximation algorithm of Theorem~\ref{thm:rTSP} on $\cI_s$.

If any solution $A_s$ has ratio less than one then it provides a violated constraint for $(\bb,\bbar)$. This corresponds to the first condition in the lemma.

If every solution $A_s$ has ratio at most one, we will show that  the second condition in the lemma holds. The non-negativity constraints are clearly satisfied by $(\frac1\gamma \bb,\, \frac1\gamma \bbar)$. To check the first set of constraints in (DLP), note that for any $(i,t)\in\cD$ and $s\in[T]$,
$$\frac{1}{\gamma}\left( b^i_t-\bar{b}^i_{st} \right) \le \max\left\{ 0,\, b^i_t-\bar{b}^i_{st} \right\} \le H^i_{st}.$$

To check the second set of constraints, note that for any $s\in [T]$ we have by the approximation guarantee in Theorem~\ref{thm:rTSP},
$$\min_{S\sse V} \, \frac{f(S)}{\sum_{v\in S} \sum_{t=s}^T \bar{b}^v_{st}} \ge \frac{1}{\gamma}.$$
This implies that $(\frac1\gamma \bb,\, \frac1\gamma \bbar)$ satisfies all these constraints.  
\end{proof}

Using the above separation oracle for (DLP) within the ellipsoid algorithm, we obtain a  $\gamma$-approximately optimal solution to (DLP), see eg.~\cite{J03}.  Then solving (LP) restricted to the (polynomially many) variables that are dual to the constraints generated in solving (DLP), we obtain a $\gamma$-approximately optimal solution to (LP) as well.

\medskip
\paragraph{Running time.}
Using the linear programming algorithms in~\cite{V90,V96} along with some preprocessing, the running time of the above  approach is dominated by $\tilde{O}(\cD^3 T^3)$ plus $\tilde{O}(\cD T^2)$ calls\footnote{The $\tilde O$ notation hides logarithmic factors.} to a subroutine for:
\begin{itemize}
\item submodular function minimization in case of JRP. 
\item minimum-ratio TSP in case of IRP.
\end{itemize}


\section{Concluding Remarks} 
\label{sec_cf}
We presented an $\mathcal{O}\left(\frac{\log T}{\log \log T}\right)$-approximation algorithm for submodular-JRP and IRP when holding costs are polynomial functions. Moreover, this approach applies to any ordering cost for which the correponding LP relaxation can be solved approximately and the ordering cost satisfies an approximate notion of fractional subadditivity.  Obtaining a constant-factor approximation algorithm for submodular JRP and IRP on general metrics (even with linear holding costs) remain the main open questions.

\section*{Acknowledgments}
The authors have benefited from valuable comments from and discussions on this work with Retsef Levi (MIT). 

\bibliographystyle{ormsv080}
\bibliography{JRPsub} 

\end{document}